\newtheorem{observation}{Observation}
\newtheorem{cor}{Corollary}
\newcommand{\degr}{\operatorname{deg}}
\newcommand{\OO}{\mathcal{O}}
\newcommand{\NP}{\(\mathsf{NP}\)}
\newcommand{\NPC}{\NP-complete}
\newcommand{\FPT}{\(\mathsf{FPT}\)}
\newcommand{\WO}{\(\mathsf{W}[1]\)}	
\newcommand{\nodeD}{{\sc Node Deletion }}
\newcommand{\PAA}{\mathsf{AA}}
\newcommand{\PAS}{\mathsf{AS}}
\newcommand{\PSA}{\mathsf{SA}}
\newcommand{\PSS}{\mathsf{SS}}
\newcommand{\IS}{\mathsf{IS}}
\newcommand{\yes}{\(\mathsf{Yes}\)}
\newcommand{\no}{\(\mathsf{No}\)}
\newcommand{\remove}[1]{}
\begin{document}
\fancyhead[O]{} 

\HeadingAuthor{Eppstein et al.} 
\HeadingTitle{Parameterized Complexity of Finding Subgraphs} 
\title{Parameterized Complexity of Finding Subgraphs with Hereditary Properties on Hereditary Graph Classes}

\author[first]{David Eppstein}{eppstein@uci.edu}
\author[second]{Siddharth Gupta}{siddhart@post.bgu.ac.il}
\author[first]{Elham Havvaei}{ehavvaei@uci.edu}

\affiliation[first]{Department of Computer Science, University of California, Irvine}

\affiliation[second]{Department of Computer Science, Ben-Gurion University of the Negev}


\maketitle

\begin{abstract}
We investigate the parameterized complexity of finding subgraphs with hereditary properties on graphs belonging to a hereditary graph class. Given a graph $G$, a non-trivial hereditary property $\Pi$ and an integer parameter $k$, the general problem $P(G,\Pi,k)$ asks whether there exists $k$ vertices of $G$ that induce a subgraph satisfying property $\Pi$. This problem, $P(G,\Pi,k)$ has been proved to be \NPC\ by Lewis and Yannakakis. The parameterized complexity of this problem is shown to be \WO-complete by Khot and Raman, if $\Pi$ includes all trivial graphs but not all complete graphs and vice versa; and is fixed-parameter tractable (\FPT), otherwise. As the problem is \WO-complete on general graphs when $\Pi$ includes all trivial graphs but not all complete graphs and vice versa, it is natural to further investigate the problem on restricted graph classes. 

Motivated by this line of research, we study the problem on graphs which also belong to a hereditary graph class
and establish a framework which settles the parameterized complexity of the problem for various hereditary graph classes. In particular, we show that:
\begin{itemize}
	\item $P(G,\Pi,k)$ is solvable in polynomial time when the graph $G$ is co-bipartite and $\Pi$ is the property of being planar, bipartite or triangle-free (or vice-versa).
	\item $P(G,\Pi,k)$ is \FPT\ when the graph $G$ is planar, bipartite or triangle-free and $\Pi$ is the property of being planar,  bipartite or triangle-free, or graph $G$ is co-bipartite and $\Pi$ is the property of being co-bipartite.
	\item $P(G,\Pi,k)$ is \WO-complete when the graph $G$ is $C_4$-free, $K_{1,4}$-free or a unit disk graph and $\Pi$ is the property of being either planar or bipartite. 
\end{itemize}
\end{abstract}

\Body 


\section{Introduction}

In this paper, we study the parameterized complexity of finding $k$-vertex induced subgraphs in a given hereditary class of graphs, within larger graphs belonging to a different hereditary class of graphs. A prototypical instance of the induced subgraph problem is the $k$-clique problem, which asks whether a given graph $G$ has a clique of size $k$. Although $k$-clique is \WO-complete for general graphs~\cite{DBLP:journals/tcs/DowneyF95}, and \NPC~even when the input graph is constrained to be a multiple-interval graph,~\cite{DBLP:conf/soda/ButmanHLR07}, it is fixed-parameter tractable (\FPT)\footnote{For basic notions in parameterized complexity, see \autoref{sec:prelim}.} in this special case~\cite{DBLP:journals/tcs/FellowsHRV09}. This example, of a \WO-complete problem for general graphs which becomes \FPT{} on constrained inputs, motivates us to seek additional examples of this phenomenon, and more broadly to attempt a classification of induced subgraph problems which can determine in many cases whether a constrained induced subgraph problem is tractable or remains hard.

We formalize a graph property as a set $\Pi$ of the graphs that have the property. A property is \emph{nontrivial} if it is neither empty nor contains all the graphs, and more strongly it is \emph{interesting} if infinitely many graphs have the property and infinitely many graphs do not have the property. A nontrivial graph property $\Pi$ is \emph{hereditary} if it is closed under taking induced subgraphs. That is, if $\Pi$ is hereditary and a graph $G$ belongs to $\Pi$, then every induced subgraph of $G$ also belongs to $\Pi$.  Given a hereditary property $\Pi$, let $\overline\Pi$ be the complementary property, the set of graphs which do not belong to $\Pi$. The \emph{forbidden} set ${\cal F}_\Pi$ of $\Pi$ is the set of graphs that are minimal for $\overline\Pi$: they belong to $\overline\Pi$, but all of their proper induced subgraphs belong to $\Pi$. For a hereditary property $\Pi$, a graph $G$ belongs to $\overline\Pi$ if and if $G$ has no induced subgraph in ${\cal F}_\Pi$. Khot and Raman~\cite{DBLP:journals/tcs/KhotR02} studied the parameterized complexity of the following unified formulation of the induced-subgraph problem, without constraints on the input graph: Given a graph $G$, an interesting hereditary property $\Pi$ and a positive integer $k$, the problem $P(G,\Pi,k)$ asks whether there exists an induced subgraph of $G$ of size $k$ that belongs to $\Pi$. They proved a dichotomy theorem for this problem: If $\Pi$ includes all trivial graphs (graphs with no edges) but not all complete graphs, or vice-versa, then the problem is \WO-complete . However, in all remaining cases, the problem is \FPT. 

Our work studies the parameterized complexity of the problem $P(G,\Pi,k)$, in cases for which it is \WO-complete for general graphs, under the constraint that the input graph $G$ belongs to a hereditary graph class $\Pi_G$. (Note that $\Pi_G$ should be a different class than $\Pi$, for otherwise the problem is trivial: just return any $k$-vertex induced subgraph of the input.)  Given a graph $G$, the interesting hereditary properties $\Pi_G$ and $\Pi$, and an integer $k$, we denote our problem by $P(G, \Pi_G, \Pi, k)$.  The main tool that we use for finding efficient algorithms for $P(G, \Pi_G, \Pi, k)$ is Ramsey's theorem, which allows us to prove the existence of either large cliques or large independent sets in arbitrary graphs, allowing some combinations of input graph size and parameter to be answered immediately without performing a search. For the cases where we find hardness results, we do so by reductions from $P(G, \Pi_G, \IS, k)$ to $P(G, \Pi_G, \Pi, k)$, where $\IS$  is the property of being an independent set. We believe our framework has interest in its own right, as a way to settle a wide class of induced-subgraph properties while avoiding the need to develop many tedious hardness proofs for individual problems.

\subsection{Our Contributions} 

We partition interesting hereditary properties into four classes named $\PAA$, $\PAS$,  $\PSA$, and $\PSS$ as follows. A hereditary property $\Pi$ belongs to:

\begin{itemize}
\item $\PAA$, if it includes all complete graphs and all independent sets.
\item $\PAS$, if it includes all complete graphs but excludes some independent sets.
\item $\PSA$, if it excludes some complete graphs but includes all independent sets.
\item $\PSS$, if it excludes some complete graphs as well as some independent sets.
\end{itemize} 

By Ramsey's theorem, an interesting hereditary property cannot belong to $\PSS$. The interesting cases for the problem $P(G, \Pi_G, \Pi, k)$ with respect to $\Pi$ are either $\Pi \in \PSA$ or $\Pi \in \PAS$. In the other two cases, when $\Pi \in \PAA$ or $\Pi \in \PSS$ the problem $P(G, \Pi_G, \Pi, k)$ is known to be fixed-parameter tractable regardless of $\Pi_G$~\cite{DBLP:journals/tcs/KhotR02} . We prove the following results related to the problem $P(G, \Pi_G,$ $ \Pi, k)$, for these interesting cases:

\begin{itemize}
	\item If $\Pi_G \in \PAS$ and $\Pi \in \PSA$ or vice versa, then the problem $P(G,\Pi_G,\Pi,k)$ is solvable in polynomial time (\autoref{thm:AS_SA}). Although the exponent of the polynomial depends in general on $\Pi$, some classes $\Pi_G$ for which subgraph isomorphism is in \FPT{} also have polynomial-time algorithms for $P(G,\Pi_G,\Pi,k)$ whose exponent is fixed independently of $\Pi$ (\autoref{thm:AS_SA_sgi}). The key insight for these problems is that these assumptions cause $\Pi_G\cap\Pi$ to be a finite set, limiting the value of $k$ and making it possible to perform a brute-force search for an induced subgraph while remaining within polynomial time.
	
	A class of problems of this form that have been extensively studied involve finding cliques in sparse graphs or sparse classes such as planar graphs; beyond being polynomial for any fixed hereditary sparse  $ \PAS$ or class of graphs, it is \FPT{} for general graphs when parameterized by degeneracy, a parameter describing the sparsity of the given graph~\cite{EppStrLof-JEA-13}. Another example problem of this type that is covered by this result is finding planar induced subgraphs of co-bipartite graphs; here, $\Pi$ is the property of being planar, in $\PSA$, and $\Pi_G$ is the property of being co-bipartite, in $\PAS$. Similarly, this result covers finding a $k$-vertex bipartite or triangle-free induced subgraph of a co-bipartite graph, or finding a $k$-vertex co-bipartite induced subgraph of a planar, bipartite, or triangle-free graph. 

 	\item If both $\Pi_G$ and $\Pi$ belong either to $\PAS$ or both belong to $\PSA$, then the problem $P(G,\Pi_G,$ $\Pi,k)$ is in \FPT{} (\autoref{thm:both_AS}). The insight that leads to this result is that large-enough graphs in $\Pi_G$ necessarily contain $k$-vertex cliques (for properties in $\PAS$) or independent sets (for properties in $\PSA$), which also belong to $\Pi$. Therefore, the only instances for which a more complicated search is needed are those for which $k$ is large enough relative to $G$ that the existence of a $k$-vertex clique or independent set cannot be guaranteed. For that range of the parameter $k$, the search complexity is in \FPT.
	
	Problems of this type that have been studied previously include finding independent sets in sparse graph families,
	as well as finding planar induced subgraphs of sparse classes of graphs~\cite{BorEppZhu-JGAA-15}.
	Finding a $k$-vertex graph that belongs to one of the four classes of forests, planar graphs, bipartite graphs, or triangle-free graphs, as an induced subgraph of a graph $G$ that belongs to another of these three classes, belongs to the problems of this type.
	
	\item If $\Pi_G \in \PSS$, then the problem $P(G,\Pi_G,\Pi,k)$ is solvable in polynomial~time (\autoref{thm:SS}). This case is trivial: there can be only finitely many graphs in $\Pi_G$ and we can precompute the answers to each one.
	\item In the remaining cases, $\Pi_G \in \PAA$, while $\Pi$ belongs to $\PAS$ or to $\PSA$. These cases include both problems known to be polynomial, such as finding independent sets in various classes of perfect graphs, problems known to be \FPT, including several other cases of independent sets~\cite{DBLP:conf/iwoca/DabrowskiLMR10}, and problems known to be hard for parameterized computation, such as finding independent sets in unit disk graphs~\cite{DBLP:conf/esa/Marx05}. Therefore, we cannot expect definitive results that apply to all cases of this form, as we obtained in the previous cases. Instead, we provide partial results suggesting that in many natural cases the complexity of $P(G,\Pi_G,\Pi,k)$ is controlled by the complexity of the simpler problem of finding independent sets:
\begin{itemize}
	 \item If $\Pi_G$ is closed under duplication of vertices (strong products with complete graphs), and $\Pi$ contains the graphs $n\cdot K_{\chi(\Pi)}$ (disjoint unions of complete graphs with the maximum chromatic number for $\Pi$), then $P(G, \Pi_G, \Pi, k)$ is as hard as $P(G, \Pi_G,\IS, k)$ (\autoref{th:chr}).
	 
	 Families $\Pi_G$ that meet these conditions, for which finding independent sets is \WO-complete, include the property of being a unit disk graph, the property of being  $C_4$-free, and the property of being $K_{1,4}$-free. Families $\Pi$ that meet these conditions include the property of being either planar or  bipartite. Therefore, $P(G, \Pi_G, \Pi, k)$ is also \WO-complete in these families.
	\item If $\Pi_G \in \PAA$ and is closed under joins with disjoint unions of cliques, and if $\Pi$ contains all joins of an independent set with a disjoint union of cliques that have chromatic number at most $\chi(\Pi) - 1$, then $P(G, \Pi_G,\Pi, k)$ is as hard as $P(G,\Pi_G,\IS, k)$ (\autoref{thm:hard_from_join}).
	\end{itemize}
\end{itemize}

\subsection{Other Related Work}

Before the investigation of the parameterized complexity of $P(G, \Pi,k)$\remove{finding subgraphs with hereditary properties on general graphs}, Lewis and Yannakakis had studied the dual of this problem, the \nodeD problem, for interesting hereditary properties, which is defined as follows: Given a graph $G$ and an interesting hereditary property $\Pi$, find the minimum number of nodes to delete from $G$ such that the resulting graph belongs to $\Pi$. They proved that the \nodeD problem is \NPC~\cite{DBLP:journals/jcss/LewisY80}. Cai~\cite{DBLP:journals/ipl/Cai96} studied the parameterized version of \nodeD and proved that the problem is \FPT, parameterized by the number of deleted vertices, for an interesting hereditary property with a finite forbidden set.

Related to our line of work on the parameterized complexity of hereditary properties, finding an independent set with the maximum cardinality (MIS) on a general graph, has been proved to be NP-hard even for planar graphs of degree at most three~\cite{DBLP:books/fm/GareyJ79}, unit disk graphs~\cite{DBLP:journals/dm/ClarkCJ90}, and $C_4$-free graphs~\cite{alekseev1982effect}. Fellows et al. proved that finding a $k$-Independent Set is \WO-hard for 2-interval graphs while its complementary problem, $k$-clique, as mentioned before is \FPT\  for multiple-interval graphs~\cite{DBLP:journals/tcs/FellowsHRV09}.


\section{Preliminaries}\label{sec:prelim}

Throughout the paper, we consider finite undirected graphs. Given a graph $G$, we denote its vertex set and edge set by $V(G)$ and $E(G)$, respectively. For a vertex $v\in V(G)$, we denote the set of all adjacent vertices of $v$ in $G$ by $N_G(v)$, i.e. $N_G(v)=\{u\in V(G)~|~\{u,v\}\in E(G)\}$. The degree of a vertex $v\in V(G)$ in $G$ is denoted by $\degr_G(v)$. Given a vertex set $S \subseteq V(G)$, $G[S]$ represents the subgraph of $G$ induced by $S$. The \emph{chromatic number} of a graph $G$ is the minimum number of colors needed to color the vertices such that no two adjacent vertices get the same~color. The chromatic number of a graph property $\Pi$ is the maximum chromatic number of any graph $G \in \Pi$.

Let $\Pi$ be a hereditary graph property. If $\Pi \in \PAS$ or $\Pi \in \PSS$, then we denote the size of the smallest independent set that does not belong to $\Pi$ by $i_\Pi$. Similarly, if $\Pi \in \PSA$ or $\Pi \in \PSS$, then we denote the number of vertices in the smallest clique that does not belong to $\Pi$ by $c_\Pi$. We denote the property of being an independent set (the family of all all independent sets) as $\IS$.
 %


 The use of parameterized complexity has been growing remarkably, in recent decades. What has emerged is a very extensive collection of techniques in diverse areas on numerous parameters.  A problem $L$ is a {\em parameterized} problem if each problem instance of $L$ is associated with a {\em parameter} $k$. For simplicity, we represent an instance of a parameterized problem $L$ as a pair $(I,k)$ where $k$ is the parameter associated with input $I$. Formally, we say that $L$ is {\em fixed-parameter tractable} if any instance $(I, k)$ of $L$ is solvable in time $f(k)\cdot |I|^{\OO(1)}$, where $|I|$ is the number of bits required to specify input $I$ and $f$ is a computable function of $k$. We remark that this framework also provides methods to show that a parameterized problem is unlikely to be \FPT. The main technique is the one of parameterized reductions analogous to those employed in classical complexity, with the concept of \WO-hardness replacing  \NP-hardness. For problems whose solution is a set (for instance of vertices or edges), the size of this set is a natural parameter for the study of the parameterized complexity of the problem. Various problems such as  $k$-vertex cover~\cite{DBLP:journals/siamcomp/BussG93, DBLP:conf/mfcs/ChenKX06, DBLP:journals/jal/ChenKJ01}, $k$-feedback vertex set~\cite{DBLP:conf/stoc/ChenLL08} have been studied under this definition of natural parameter. There are numerous examples of other studies not solely parameterized by the size of the solution~\cite{DBLP:conf/sat/Szeider03, DBLP:journals/jgaa/BannisterE18, DBLP:journals/algorithmica/EppsteinH20}. In this paper, we study our problems under their natural parameter, the number of vertices of the subgraph we are seeking. For more information on parameterized complexity, we refer the reader to~\cite{DBLP:books/sp/CyganFKLMPPS15,DBLP:series/txcs/DowneyF13}. 
%


\section{Tractability Results}


In this section, we identify pairs of hereditary properties $\Pi_G$ and $\Pi$ for which the problem $P(G,\Pi_G,\Pi, k)$ is either in \textsf{P} or \FPT.  Our proofs  use Ramsey numbers which we begin by defining. For any positive integers $r$ and $s$, there exists a minimum positive integer $R(r,s)$ such that any graph on at least $R(r,s)$ vertices contains either a clique of size $r$ or an independent set of size $s$. It is well-known that $R(r,s) \leq \tbinom{ r + s - 2 }{r - 1}$~\cite{DBLP:books/daglib/0070576}. It will also be convenient in our analysis to have a notation for the time to test whether a given $k$-vertex graph (typically, a subgraph of our given graph $G$) has property $\Pi$; we let $t_\Pi(k)$ denote this time complexity.

\begin{theorem}\label{thm:AS_SA}
If $\Pi_G \in \PAS$ and $\Pi \in \PSA$ or vice versa, then the problem $P(G,\Pi_G,$ $\Pi,k)$ is solvable in polynomial time.
\end{theorem}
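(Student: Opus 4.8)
The plan is to show that the hypotheses force $\Pi_G \cap \Pi$ to contain only graphs of bounded size; this caps the relevant values of $k$ by a constant and reduces the whole problem to a bounded-exponent brute-force search. Assume without loss of generality that $\Pi_G \in \PAS$ and $\Pi \in \PSA$; the reverse case is symmetric, with the roles of cliques and independent sets (and of the quantities $c$ and $i$) exchanged.

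First I would bound the structure of graphs lying in each class. Since $\Pi_G \in \PAS$ excludes some independent set, the smallest independent set not in $\Pi_G$ has size $i_{\Pi_G}$, and because $\Pi_G$ is hereditary no graph of $\Pi_G$ can contain an induced independent set of size $i_{\Pi_G}$; hence every $H \in \Pi_G$ has independence number $\alpha(H) \le i_{\Pi_G} - 1$. Symmetrically, $\Pi \in \PSA$ excludes some clique, the smallest of which has $c_\Pi$ vertices, so by heredity every $H \in \Pi$ has clique number $\omega(H) \le c_\Pi - 1$. Therefore any $H \in \Pi_G \cap \Pi$ simultaneously avoids an independent set of size $i_{\Pi_G}$ and a clique of size $c_\Pi$.

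Next I would invoke Ramsey's theorem. By the definition of $R(c_\Pi, i_{\Pi_G})$, any graph on at least $R(c_\Pi, i_{\Pi_G})$ vertices contains either a clique of size $c_\Pi$ or an independent set of size $i_{\Pi_G}$; hence every $H \in \Pi_G \cap \Pi$ has fewer than $R(c_\Pi, i_{\Pi_G})$ vertices. Since $G \in \Pi_G$ and $\Pi_G$ is hereditary, every induced subgraph of $G$ also lies in $\Pi_G$; thus any $k$-vertex induced subgraph witnessing a \yes-instance lies in $\Pi_G \cap \Pi$ and must satisfy $k < R(c_\Pi, i_{\Pi_G})$. Setting $B := R(c_\Pi, i_{\Pi_G})$, a constant depending only on $\Pi_G$ and $\Pi$, the algorithm is: if $k \ge B$, answer \no; otherwise enumerate all $\binom{n}{k}$ vertex subsets $S$ of size $k$ and test whether $G[S] \in \Pi$, answering \yes\ if and only if some subset passes.

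Correctness is immediate from the size bound, and the running time is $O\!\left(\binom{n}{k}\, t_\Pi(k)\right) = O\!\left(n^{k}\, t_\Pi(k)\right)$, which is polynomial in $n$ because $k < B$ is bounded by a constant (so $t_\Pi(k)$ is likewise bounded). There is no genuine obstacle beyond the initial observation; the only point requiring care is to confirm that $B$ depends solely on the two properties and not on the input, so that the exponent $k$ of the search stays fixed and the bound is truly polynomial. This dependence of $B$ on $R(c_\Pi, i_{\Pi_G})$ is also exactly what makes the exponent of the polynomial depend on $\Pi$ in general.
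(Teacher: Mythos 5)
Your proposal is correct and follows essentially the same route as the paper's proof: use Ramsey's theorem to conclude that any witness lies in $\Pi_G\cap\Pi$ and hence has fewer than $R(c_\Pi,i_{\Pi_G})$ vertices, answer \no{} for larger $k$, and otherwise run a brute-force search over all $k$-subsets, whose exponent is bounded by the constant $R(c_\Pi,i_{\Pi_G})-1$ depending only on the two properties. The only cosmetic difference is that you phrase the key step as a finiteness bound on $\Pi_G\cap\Pi$ while the paper argues directly that a large induced subgraph would contradict membership in $\Pi$ or in $\Pi_G$; these are the same argument.
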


\begin{proof}
We give a proof for the case when $\Pi_G \in \PAS$ and $\Pi \in \PSA$. The proof for the other case is symmetric under reversal of the roles of cliques and independent sets. Recall that every graph on $R(c_\Pi,i_{\Pi_G})$ vertices contains either a clique of size $c_\Pi$, too large to have property $\Pi$, or it contains an independent set of size $i_{\Pi_G}$, too large to have property $\Pi_G$.
Therefore, If $k \geq R(c_\Pi,i_{\Pi_G})$, it is impossible for a $k$-vertex induced subgraph of a graph $G$ in $\Pi_G$
to also have property $\Pi$, because such a subgraph would either have a large clique (contradicting the membership of the subgraph in $\Pi$) or a large independent set (contradicting the membership of $G$ in $\Pi_G$). Therefore, for such large values of $k$, an algorithm for $P(G,\Pi_G,$ $\Pi,k)$ can simply answer \no{} without doing any searching.

If $k < R(c_\Pi,i_{\Pi_G})$, then we can use a brute force search to test whether there exists a $k$-vertex induced subgraph having property $\Pi$. Specifically, we enumerate all $k$-vertex subsets of the vertices of $G$, construct the induced subgraph for each subset, and test whether any of these induced subgraphs belongs to $\Pi$. Given a representation of $G$ for which we can test adjacency in constant time, the time to construct each subgraph is $O(k^2)$, so the total time taken by this search is
\[
\binom{n}{k} \left( O(k^2)+ t_\Pi(k) \right) \le n^r \left( O(r^2)+ t_\Pi(r) \right),
\]
where $r=R(c_\Pi,i_{\Pi_G})-1$. As the right hand side of this time bound is a polynomial of $n$ without any dependence on $k$, this is a polynomial time algorithm. Thus, the problem $P(G,\Pi_G,\Pi,k)$ is solvable in polynomial time.
\end{proof}

Although polynomial, the time bound of \autoref{thm:AS_SA} has an exponent $r$ that depends on $\Pi$ and $\Pi_G$, and may be large.
An alternative approach, which we outline next, may lead to better algorithms for properties $\Pi_G$ for which the induced subgraph isomorphism problem is in \FPT, as it is for instance for planar graphs~\cite{DBLP:journals/jgaa/Eppstein99} or more generally for nowhere-dense families of graphs~\cite{NesOss-SGI-12}.

\begin{theorem}\label{thm:AS_SA_sgi}
If $\Pi_G \in \PAS$ and $\Pi \in \PSA$ or vice versa, and induced subgraph isomorphism is in \FPT{} in $\Pi_G$ with time $t_{\operatorname{sgi}}(n,k)$ to find $k$-vertex induced subgraphs of $n$-vertex graphs,
then the problem $P(G,\Pi_G,$ $\Pi,k)$ is solvable in polynomial time $O(t_{\operatorname{sgi}}(n,r))$,
for the same constant $r$ (depending on $\Pi$ and $\Pi_G$ but not on $k$ or $G$) as in \autoref{thm:AS_SA}.
\end{theorem}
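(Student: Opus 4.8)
The plan is to reuse the case analysis of \autoref{thm:AS_SA}, replacing only the brute-force search in the small-$k$ regime by a constant number of calls to the induced subgraph isomorphism algorithm. As before, I treat the case $\Pi_G \in \PAS$ and $\Pi \in \PSA$; the other case is symmetric under reversal of the roles of cliques and independent sets. For $k \geq R(c_\Pi, i_{\Pi_G})$ the algorithm answers \no{} immediately, by exactly the Ramsey argument of \autoref{thm:AS_SA}: no $k$-vertex induced subgraph of a graph in $\Pi_G$ can lie in $\Pi$, since it would contain either a clique of size $c_\Pi$ or an independent set of size $i_{\Pi_G}$. It thus remains to handle $k \leq r$, where $r = R(c_\Pi, i_{\Pi_G}) - 1$.

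The key observation is that any $k$-vertex induced subgraph $H$ of $G$ witnessing a \yes{} instance must itself belong to $\Pi_G \cap \Pi$: it lies in $\Pi$ by assumption, and it lies in $\Pi_G$ because $\Pi_G$ is hereditary and $G \in \Pi_G$. Since $k \leq r$ is bounded by a constant depending only on $\Pi$ and $\Pi_G$, there are at most $2^{\binom{r}{2}}$ graphs on $k$ vertices in total, and hence only a constant number of candidate pattern graphs, namely those $k$-vertex graphs lying in $\Pi_G \cap \Pi$. I would enumerate all $k$-vertex graphs, discard those not in $\Pi_G \cap \Pi$ (each such test costing $O(t_\Pi(k))$ plus the time to test membership in $\Pi_G$, both constant for $k \leq r$), and for each surviving pattern $H$ invoke the \FPT{} induced subgraph isomorphism algorithm for $\Pi_G$ to decide whether $G$ contains $H$ as an induced subgraph, in time $t_{\operatorname{sgi}}(n, k) \leq t_{\operatorname{sgi}}(n, r)$. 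The algorithm answers \yes{} if any of these tests succeeds and \no{} otherwise. Correctness is immediate from the observation above: $G$ has a $k$-vertex induced subgraph in $\Pi$ if and only if some graph in the candidate list embeds into $G$ as an induced subgraph.

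The only point requiring care is that the number of candidate patterns is truly independent of both $n$ and $k$, which I expect to be the main thing to spell out. This follows because the Ramsey bound caps $k$ at the constant $r$, so the enumeration ranges over a fixed finite collection of small graphs rather than a collection that grows with the input. Granting that, summing over this constant-size list and absorbing the multiplicative constant (using monotonicity of $t_{\operatorname{sgi}}$ in its second argument to replace $k$ by $r$) yields a total running time of $O(t_{\operatorname{sgi}}(n, r))$, as claimed.
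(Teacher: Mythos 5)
Your proposal is correct and follows essentially the same route as the paper's proof: answer \no{} for $k>r$ by the Ramsey argument, and otherwise enumerate the constantly many $k$-vertex pattern graphs, filter by property membership, and call the \FPT{} induced subgraph isomorphism algorithm once per surviving pattern. The only (harmless) difference is that you additionally filter candidates by membership in $\Pi_G$, which the paper omits since filtering by $\Pi$ alone already suffices.
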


\begin{proof}
If $k > r$, we answer \no{} immediately as in \autoref{thm:AS_SA}.
Otherwise, we generate all $k$-vertex graphs, test each of them for having property $\Pi$,
and if so apply the subgraph isomorphism algorithm for graphs with property $\Pi_G$ to $G$ and the generated graph.
There are $2^{O(r^2)}$ graphs to generate, testing for property $\Pi$ takes time $t_\Pi(r)$ for each one,
and testing for being an induced subgraph of $G$ takes time $t_{\operatorname{sgi}}(n,r)$ for each one, so the time is as stated.
\end{proof}

In particular, these problems can be solved in linear time for planar graphs.

\begin{theorem}\label{thm:both_AS}
If both $\Pi_G$ and $\Pi$ belong to $\PAS$, or if both belong to $\PSA$, then the problem $P(G,\Pi_G,$ $\Pi,k)$ is in \FPT.
\end{theorem}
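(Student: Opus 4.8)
The plan is to reuse the Ramsey-based dichotomy behind \autoref{thm:AS_SA}, but to arrange the two arguments of the Ramsey number so that the parameter $k$ itself appears as one of them. I would treat the case in which both $\Pi_G$ and $\Pi$ lie in $\PAS$; the case in which both lie in $\PSA$ is symmetric under exchanging the roles of cliques and independent sets, so I would only remark on the needed substitutions at the end.

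First I would observe that, since $G \in \Pi_G$ and $\Pi_G \in \PAS$, the graph $G$ contains no independent set of size $i_{\Pi_G}$. Hence, by Ramsey's theorem, any such $n$-vertex graph $G$ with $n \ge R(k, i_{\Pi_G})$ must contain a clique of size $k$. Because $\Pi \in \PAS$ contains all complete graphs, this $k$-clique is a $k$-vertex induced subgraph belonging to $\Pi$, so the algorithm can answer \yes{} immediately, without performing any search and without even needing to locate the clique.

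The crux is the complementary regime $n < R(k, i_{\Pi_G})$. Here I would invoke the bound $R(k, i_{\Pi_G}) \le \binom{k + i_{\Pi_G} - 2}{k-1}$. Since $i_{\Pi_G}$ is a fixed constant depending only on $\Pi_G$, this binomial coefficient equals $\binom{k + i_{\Pi_G} - 2}{i_{\Pi_G}-1}$, a polynomial in $k$ of degree $i_{\Pi_G}-1$; consequently $n$ is bounded by a function of $k$ alone. I can therefore brute-force over all $\binom{n}{k} \le 2^n$ vertex subsets, constructing each induced subgraph in time $O(k^2)$ and testing it for membership in $\Pi$ in time $t_\Pi(k)$. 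Because both $n$ and the number of subsets are bounded purely in terms of $k$, the entire search costs a function of $k$ only; combined with the polynomial-time answer for large $n$, this gives the \FPT{} bound. For the $\PSA$ case one repeats the argument with $G$ free of $c_{\Pi_G}$-cliques, uses $R(c_{\Pi_G}, k)$ in place of $R(k, i_{\Pi_G})$, and exploits the fact that $\Pi \in \PSA$ then contains the guaranteed $k$-vertex independent set.

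The main point to get right—more a key observation than a genuine obstacle—is that placing $k$ (rather than the constant $c_\Pi$ used in \autoref{thm:AS_SA}) as an argument of the Ramsey number is exactly what converts the ``no easy answer'' regime from a polynomial-size instance into one whose size is bounded by a function of the parameter, which is precisely the condition required for fixed-parameter tractability.
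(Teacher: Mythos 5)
Your proof is correct and follows essentially the same route as the paper's: answer \yes{} when $|V(G)|\ge R(k,i_{\Pi_G})$ because a $k$-clique must exist and lies in $\Pi$, use the binomial bound on the Ramsey number to get a computable threshold, and brute-force the remaining instances whose size is bounded by a function of $k$ alone, with the $\PSA$ case handled symmetrically. No substantive differences.
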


\begin{proof}
We give a proof for the case when both $\Pi_G$ and $\Pi$ belong to $\PAS$. The proof for the other case is again symmetric under reversal of the roles of cliques and independent sets. For a graph $G\in\Pi_G$ that is large enough that $|V(G)| \geq R(k,i_{\Pi_G})$, it must be the case that $G$ contains a clique $C$ of size $k$, for it cannot contain an independent set of size $i_{\Pi_G}$ without violating the assumption that it belongs to $\Pi_G$. Because $\Pi$ is assumed to be in $\PAS$, it contains all cliques, so this $k$-vertex clique belongs to $\Pi$. Therefore, for graphs with this many vertices, it is safe to answer \yes.
There is a small subtlety here, in that we do not know an efficient method to calculate $R(k,i_{\Pi_G})$, and an inefficient method would unnecessarily increase the dependence of our time bounds on the parameter~$k$. However, we can use the inequality
\[
R(k,i_{\Pi_G}) \le \binom{k+i_{\Pi_G}-2}{k-1}
\]
to get a bound on this number that is easier to calculate.
Our algorithm can simply test whether
$|V|\ge\tbinom{k+i_{\Pi_G}-2}{k-1}$, and if so we return \yes{} without doing any searching. 

If $|V(G)| < \tbinom{k+i_{\Pi_G}-2}{k-1}$, then constructing and checking all induced subgraphs of $G$ of size $k$ to detect whether there exists such a subgraph belonging to $\Pi$ takes time
\[
\binom{k+i_{\Pi_G}-2}{k-1}^k \left(O(k^2) + t_\Pi(k)\right),
\]
a time complexity that is bounded by a function of $k$ but independent of $n$.
As the times for both cases are of the appropriate form, the problem $P(G,\Pi_G,\Pi,k)$ is in \FPT.
\end{proof}

The following corollaries can be directly obtained from \autoref{thm:AS_SA} and \autoref{thm:both_AS}.

\begin{cor}\label{co:poly}
If $\Pi_G$ is the property of being co-bipartite and $\Pi$ is the property of being a forest, planar, bipartite or triangle-free (or vice versa), then the problem $P(G,\Pi_G,\Pi,k)$ is solvable in polynomial time.
\end{cor}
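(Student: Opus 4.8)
The plan is to reduce this corollary to the general polynomial-time result of \autoref{thm:AS_SA} by placing each of the named properties into its correct class among $\PAA$, $\PAS$, $\PSA$, and $\PSS$. Once the classification is established, the corollary follows immediately, since \autoref{thm:AS_SA} handles exactly the combination $\Pi_G \in \PAS$ with $\Pi \in \PSA$ together with the symmetric ``vice versa'' direction. Note that \autoref{thm:both_AS} is not needed for this particular statement: it would apply only to same-class combinations (both in $\PAS$ or both in $\PSA$), which do not arise here and which in any case yield only \FPT{} rather than polynomial time.

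First I would classify the property of being co-bipartite. Every complete graph $K_n$ is co-bipartite, since its complement is edgeless and hence bipartite (equivalently, the vertex set of $K_n$ partitions into two cliques, one of which is empty). On the other hand, an independent set on three vertices is not co-bipartite, because its complement $K_3$ is not bipartite; thus co-bipartite excludes every independent set of size at least three. Being interesting and hereditary (an induced subgraph of a union of two cliques is again a union of two cliques), co-bipartite therefore belongs to $\PAS$, with $i_{\Pi_G} = 3$.

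Next I would classify the four candidate properties for $\Pi$. Each of forest, planar, bipartite, and triangle-free contains every independent set, since an edgeless graph trivially has each of these properties, and each is hereditary. Each also excludes some complete graph: $K_3$ is neither a forest, nor bipartite, nor triangle-free, while $K_5$ is not planar (and $K_4$ is). Hence all four properties lie in $\PSA$, with $c_\Pi = 3$ in the first three cases and $c_\Pi = 5$ for planarity.

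With these memberships in hand, the corollary is immediate. When $\Pi_G$ is co-bipartite (in $\PAS$) and $\Pi$ is one of the four properties (in $\PSA$), \autoref{thm:AS_SA} directly supplies a polynomial-time algorithm; the ``vice versa'' direction is covered by the same theorem after swapping the roles of $\PAS$ and $\PSA$. I do not expect any genuine obstacle here beyond verifying the classification, which is routine. The only point deserving a moment's care is recognizing that co-bipartite sits in $\PAS$ rather than $\PAA$: this hinges on the single observation that large independent sets fail to be co-bipartite even though all cliques are co-bipartite.
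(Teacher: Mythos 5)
Your proposal is correct and matches the paper's intent exactly: the paper derives this corollary directly from \autoref{thm:AS_SA} by the same classification (co-bipartite in $\PAS$, the four target properties in $\PSA$), and your observation that \autoref{thm:both_AS} is not needed here is also accurate.
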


\begin{cor}\label{co:FPT}
If $\Pi_G$ and $\Pi$ are the properties of being planar, bipartite or triangle-free, then the problem $P(G,\Pi_G,\Pi,k)$ is \FPT.
\end{cor}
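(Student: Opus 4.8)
The plan is to verify that each of the three properties---being planar, bipartite, or triangle-free---belongs to the class $\PSA$, and then to invoke \autoref{thm:both_AS}, which establishes \FPT{} membership whenever both $\Pi_G$ and $\Pi$ lie in $\PSA$. Recall that a hereditary property belongs to $\PSA$ exactly when it includes all independent sets but excludes some complete graph. So the corollary reduces entirely to checking the two membership conditions for each of the three properties.

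First I would handle the ``includes all independent sets'' condition, which is immediate in all three cases: an independent set has no edges, so it is trivially planar (it can be drawn with no crossings), trivially bipartite (any 2-coloring works since there are no edges to violate), and trivially triangle-free (it has no triangles). Hence each of the three properties contains every independent set. Second I would check the ``excludes some complete graph'' condition. For planarity, the complete graph $K_5$ is the standard nonplanar obstruction (by Kuratowski/Wagner), so planarity excludes $K_5$. For bipartiteness, any complete graph on at least three vertices, e.g.\ $K_3$, contains an odd cycle and is therefore not bipartite. For triangle-freeness, $K_3$ is itself a triangle, so it is excluded. Thus each property excludes some complete graph, confirming membership in $\PSA$.

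With all three properties shown to be in $\PSA$, any choice of $\Pi_G$ and $\Pi$ from among them places both into $\PSA$, and \autoref{thm:both_AS} immediately yields that $P(G,\Pi_G,\Pi,k)$ is \FPT. I would also note that these are genuinely \emph{interesting} hereditary properties (infinitely many graphs satisfy each, and infinitely many fail each), so the hypotheses of the framework are met and the invocation is legitimate.

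There is no real obstacle here: the argument is a routine verification of definitions followed by a direct appeal to the already-proven \autoref{thm:both_AS}. The only point requiring the slightest care is being explicit that each property is hereditary and interesting---closure under induced subgraphs is clear since deleting vertices cannot create a crossing, an odd cycle, or a triangle where none existed---but this is standard and can be stated without detailed computation.
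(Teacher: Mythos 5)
Your proposal is correct and matches the paper's intended argument: the paper derives this corollary directly from \autoref{thm:both_AS} by noting that planarity, bipartiteness, and triangle-freeness all lie in $\PSA$ (each contains all independent sets and excludes $K_5$ or $K_3$). Your explicit verification of the $\PSA$ membership conditions is exactly the routine check the paper leaves implicit.
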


For completeness, we state the following (trivial) theorem:

\begin{theorem}\label{thm:SS}
If $\Pi_G \in \PSS$, then the problem $P(G,\Pi_G,\Pi,k)$ is solvable in polynomial time.
\end{theorem}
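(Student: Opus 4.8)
The plan is to show that the hypothesis $\Pi_G \in \PSS$ is so restrictive that it bounds the size of every graph in $\Pi_G$ by a constant, after which the problem becomes trivially solvable. The whole argument is a single application of Ramsey's theorem together with heredity.

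First I would appeal to the definitions from \autoref{sec:prelim}: since $\Pi_G \in \PSS$, the property excludes some complete graph and some independent set, so both the quantities $c_{\Pi_G}$ (the number of vertices in the smallest clique not in $\Pi_G$) and $i_{\Pi_G}$ (the size of the smallest independent set not in $\Pi_G$) are well-defined and finite. These two finite numbers are the only data about $\Pi_G$ that the argument needs.

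Next I would invoke Ramsey's theorem in the form used in the preceding proofs: every graph on at least $R(c_{\Pi_G}, i_{\Pi_G})$ vertices contains either a clique of size $c_{\Pi_G}$ or an independent set of size $i_{\Pi_G}$. Neither of these can occur as an induced subgraph of any graph in $\Pi_G$: because $\Pi_G$ is hereditary, an induced clique of size $c_{\Pi_G}$ (resp.\ an induced independent set of size $i_{\Pi_G}$) would itself belong to $\Pi_G$, contradicting the choice of $c_{\Pi_G}$ (resp.\ $i_{\Pi_G}$). Consequently every graph in $\Pi_G$ has fewer than $R(c_{\Pi_G}, i_{\Pi_G})$ vertices, a bound depending only on $\Pi_G$ and not on the input. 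In particular the input graph $G$ has at most $R(c_{\Pi_G}, i_{\Pi_G}) - 1$ vertices, a constant.

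Finally, with $|V(G)|$ bounded by a fixed constant, the problem $P(G,\Pi_G,\Pi,k)$ is solved in constant time: if $k$ exceeds this constant we answer \no{} immediately, and otherwise we enumerate the constantly many $k$-vertex induced subgraphs of $G$ and test each for membership in $\Pi$, answering \yes{} if any of them has the property. (Equivalently, since $\Pi_G$ contains only finitely many graphs up to isomorphism, one may precompute the answers.) Either way the running time is $O(1)$, which is in particular polynomial in $n$, establishing the claim. I do not expect any genuine obstacle here; the only point requiring care is that $\PSS$ guarantees both $c_{\Pi_G}$ and $i_{\Pi_G}$ are simultaneously finite, which is precisely what lets Ramsey's theorem cap the order of every graph in $\Pi_G$.
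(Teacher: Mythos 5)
Your argument is correct and is essentially the paper's own proof: both bound $|V(G)| < R(c_{\Pi_G}, i_{\Pi_G})$ via Ramsey's theorem and heredity, then solve the constant-size instance by brute force or precomputation. You merely spell out the heredity step that the paper leaves implicit.
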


\begin{proof}
We have  $|V(G)| < R(c_{\Pi_G},i_{\Pi_G})$, because otherwise $G$ has either a clique of size $c_{\Pi_G}$ or a trivial graph of size $i_{\Pi_G}$, a contradiction. Because $V(G)$ is bounded, there are only finitely many valid inputs to the problem $P(G,\Pi_G,\Pi,k)$ and we can precompute the solutions to each one.
\end{proof}

\autoref{tb:fpt} briefly summarizes the results of Theorems \ref{thm:AS_SA}, \ref{thm:both_AS} and \ref{thm:SS}. 

\begin{table}[!t]
\centering
\begin{tabular}{ r|c|c| }
\multicolumn{1}{r}{}
 &  \multicolumn{1}{c}{$\Pi \in \PSA$}
 & \multicolumn{1}{c}{$\Pi \in \PAS$} \\
\cline{2-3}
$\Pi_G \in  \PAS$ & {\shortstack{\\ \\ If $ k < R(c_\Pi,i_{\Pi_G})$ \\ check all induced subgraphs \\ of size $k$, otherwise return \no}}  &  {\shortstack{If $|V(G)| < \tbinom{k+i_{\Pi_G}-2}{k-1}$ \\ check all induced subgraphs \\ of size $k$, otherwise return \yes}} \\[0.3cm]
\cline{2-3}
$\Pi_G \in \PSA$ & {\shortstack{\\ \\If $|V(G)| < \tbinom{k+c_{\Pi_G}-2}{k-1}$ \\check all induced subgraphs \\ of size $k$, otherwise return \yes}}  & 
 {\shortstack{If $ k < R(c_{\Pi_G},i_\Pi)$ \\check all induced subgraphs\\ of size $k$, otherwise return \no}} \\[0.3cm]
\cline{2-3}
$\Pi_G \in  \PSS$ & \multicolumn{2}{|c|}{\shortstack{\\ \\$|V(G)| < R(c_{\Pi_G},i_{\Pi_G})$, precompute all possible inputs}} \\ [0.3cm]\cline{2-3}
\end{tabular}

\caption{Summary of Theorems \ref{thm:AS_SA}, \ref{thm:both_AS} and \ref{thm:SS}.  \label{tb:fpt}}

\end{table}

\section{Hardness from strong products}\label{sec:hardness}

In this section, we prove some hardness results for the problem $P(G,\Pi_G,\Pi,k)$, when $\Pi_G \in \PAA$ and $\Pi \in \PSA$. 

\subsection{Hardness from strong products with cliques}
To formulate the first of these results in full generality, we need some definitions. The \emph{strong product} $G\boxtimes H$ is defined as a graph whose vertex set $V(G)\times V(H)$ consists of the ordered pairs of a vertex in $G$ and a vertex in $H$, with two of these ordered pairs $(u,v)$ and $(u',v')$ adjacent if $u$ and $u'$ are adjacent or equal, and $v$ and $v'$ are adjacent or equal. In particular, the strong product with a complete graph, $G\boxtimes K_i$, can be thought of as making $i$ copies of each vertex in $G$, with two copies of the same vertex always adjacent, and with adjacency between copies of different vertices remaining the same as in $G$. We use the notation $n\cdot K_i$ to denote the disjoint union of $n$ copies of an $i$-vertex complete graph; this is the strong product of an $n$-vertex independent set with an $i$-vertex clique. Given a family of graphs $\Pi$, we let $\chi(\Pi)$ denote the maximum chromatic number (if it exists) of the graphs in $\Pi$.

\begin{observation}\label{obs:ISsize}
Let $G$ be a graph on $n$ vertices with chromatic number $\chi(G)$. Then, there exists an independent set of $G$ of size at least $n/\chi(G)$.
\end{observation}

Namely, the large independent set of the observation can be chosen as the largest color class of any optimal coloring of~$G$.


\begin{theorem}\label{th:chr}
Let $\Pi_G \in \PAA$ be a hereditary property which is closed under strong products with complete graphs, and let $\Pi \in \PSA$ be a hereditary property such that, for all $n$, the graph $n\cdot K_{\chi(\Pi)}$ belongs to $\Pi$. Then, the problem $P(G,\Pi_G,\Pi,k)$ is as hard as $P(G,\Pi_G,\IS, k)$.
\end{theorem}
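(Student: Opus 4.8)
The plan is to give a parameterized reduction from $P(G,\Pi_G,\IS,k)$ to $P(G,\Pi_G,\Pi,k)$. Given an instance $(G,k)$ of the independent-set problem with $G\in\Pi_G$, I would construct the instance $(G',k')$ where $G' = G\boxtimes K_{\chi(\Pi)}$ and $k' = k\cdot\chi(\Pi)$. Because $\Pi_G$ is closed under strong products with complete graphs, $G'$ again lies in $\Pi_G$, so this is a legitimate instance of the same problem. The map is computable in polynomial time (the product has $\chi(\Pi)\cdot|V(G)|$ vertices), and the new parameter $k'$ depends only on $k$ and the fixed constant $\chi(\Pi)$, so the reduction is an \FPT{} reduction.

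First I would record the adjacency structure of $G' = G\boxtimes K_{\chi(\Pi)}$: writing each vertex as a pair $(u,c)$ with $u\in V(G)$ and $c$ a copy index, two distinct pairs $(u,c)$ and $(u',c')$ are adjacent exactly when $u=u'$ (so the copies of a single vertex form a clique) or $u$ and $u'$ are adjacent in $G$. Consequently, two vertices of $G'$ are \emph{non}-adjacent precisely when $u\neq u'$ and $u,u'$ are non-adjacent in $G$. This structural observation drives both directions of correctness.

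For the forward direction, if $G$ has an independent set $S$ of size $k$, I would take all $\chi(\Pi)$ copies of each vertex of $S$ in $G'$. This is a set of $k\cdot\chi(\Pi)=k'$ vertices whose induced subgraph is exactly $k\cdot K_{\chi(\Pi)}$, since copies of non-adjacent vertices stay non-adjacent; by hypothesis this graph belongs to $\Pi$, so $(G',k')$ is a \yes-instance. For the backward direction, suppose $G'$ has an induced subgraph $H\in\Pi$ on $k'$ vertices. Since $\chi(\Pi)$ is the maximum chromatic number over $\Pi$, we have $\chi(H)\le\chi(\Pi)$, so by \autoref{obs:ISsize} the graph $H$ contains an independent set of size at least $|V(H)|/\chi(\Pi)=k$. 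By the adjacency analysis, the vertices of this independent set use pairwise-distinct first coordinates that are pairwise non-adjacent in $G$, so projecting onto those first coordinates yields an independent set of size at least $k$ in $G$.

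The step I expect to require the most care is the backward direction, where I must combine the chromatic bound with \autoref{obs:ISsize} and then argue that the resulting independent set of $H$ projects \emph{injectively} onto an equally large independent set of $G$. This injectivity is exactly what the clique structure among the copies of a single vertex guarantees, and it is the point that forces $k'=k\cdot\chi(\Pi)$ to be the correct target parameter: any smaller blow-up would not let the Ramsey-free counting argument of the observation recover a full independent set of size $k$.
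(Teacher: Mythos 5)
Your proof is correct and follows essentially the same route as the paper's: reduce via the strong product $G\boxtimes K_m$ with $k'=km$, realize $k\cdot K_m$ as the image of an independent set in the forward direction, and combine \autoref{obs:ISsize} with the clique-per-copy structure to project an independent set back in the reverse direction. The only difference is that you take $m=\chi(\Pi)$ where the paper's proof writes $m=\chi(G)$; your choice is the cleaner one, since it makes both $k\cdot K_m\in\Pi$ and the bound $\chi(H)\le m$ immediate from the hypotheses and keeps the reduction trivially polynomial-time.
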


\begin{proof}
We describe a polynomial-time parameterized reduction from instances of $P(G,\Pi_G,\IS, k)$ to equivalent instance of $P(G,\Pi_G,\Pi,k')$, where $k'$ depends only on $k$ (and not on $G$). The reduction transforms the graph $G$ of the instance into a new graph $G'=G\boxtimes K_{\chi(G)}$, and transforms the parameter $k$ into a new parameter value $k'=k\cdot\chi(G)$.

As we have assumed that $\Pi_G$ is closed under strong products with complete graphs, it follows that $G' \in \Pi_G$, so the reduction produces a valid instance of $P(G,\Pi_G,\Pi,k')$. To show that this instance is equivalent to the starting instance, we show that $G$ has an independent set of size $k$ if and only if $G'$ has an induced subgraph of size $k'$  belonging to $\Pi$.
\begin{description}
\item[($\Rightarrow$)] Let $I$ be an independent set of $G$ of size $k$, and let $X=I\boxtimes K_{\chi(G)}$ be the subgraph of $G'$ induced by the set of all copies of vertices in $I$. Then $|V(X)|=k'$ and, as a graph of the form $k\cdot K_{\chi(G)}$, $X$ belongs to $\Pi$ by assumption.

\item[($\Leftarrow$)] Let $H \in \Pi$ be an induced subgraph of $G'$ of size $k'$. By \autoref{obs:ISsize}, it has an independent set $I'$ of size $k'/\chi(G)\ge k$. This independent set can include at most one copy of each vertex in $G$, so the set $I$ of vertices in $G$ whose copies are used in $I'$ must also have size $\ge k$. Further, $I$ is independent, for any edge between its vertices would be copied as an edge in $G'$, contradicting the assumption that we have an independent set in $G'$. Therefore, $I$ is an independent set of size $\ge k$ in $G$, as desired.
\end{description}
\end{proof}

The families of unit-disk graphs, $C_4$-free graphs, and $K_{1,4}$-free graphs all belong to $\PAA$, and are closed under strong products with complete graphs. Finding independent sets is also known to be complete for unit-disk graphs~\cite{DBLP:conf/esa/Marx05,DBLP:conf/iwpec/Marx06}, $C_4$-free graphs~\cite{DBLP:conf/iwpec/BonnetBCTW18}, and $K_{1,4}$-free graphs~\cite{DBLP:journals/algorithmica/HermelinML14}. Moreover, the families of planar graphs and of bipartite graphs both have the property that $n\cdot K_{\chi(\Pi)}\in\Pi$. For instance, in planar graphs, the graph $n\cdot K_{\chi(\Pi)}$ consists of $n$ disjoint copies of $K_4$, a planar graph, and forming disjoint unions preserves planarity. Therefore, we have the following corollary:

\begin{cor}
\label{chr-example}
If $\Pi_G$ is the property of being $(a)$ unit-disk, $(b)$ $C_4$-free, or $(c)$ $K_{1,4}$-free , and $\Pi$ is the property of being either  planar  or  bipartite, then the problem $P(G,\Pi_G,\Pi,k)$ is \WO-complete. 
\end{cor}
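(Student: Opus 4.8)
The plan is to derive the corollary by checking, for each of the six pairings of a class $\Pi_G$ chosen among unit-disk, $C_4$-free, and $K_{1,4}$-free graphs with a property $\Pi$ chosen among planar and bipartite graphs, that the hypotheses of \autoref{th:chr} hold, and then combining the reduction it provides with the known parameterized hardness of finding independent sets on these input classes. Once \autoref{th:chr} applies it yields a parameterized reduction from $P(G,\Pi_G,\IS,k)$ to $P(G,\Pi_G,\Pi,k)$, so \WO-hardness transfers; I would handle membership in \WO separately.

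First I would verify the conditions on $\Pi_G$. Each candidate class contains all complete graphs (a clique is a unit-disk graph, has no induced $C_4$, and has no induced $K_{1,4}$) and all independent sets, so each lies in $\PAA$. The substantive point is closure under strong products with complete graphs, that is, under duplication of vertices. For unit-disk graphs this is geometrically immediate, placing the copies of a vertex at coincident centers. For the two forbidden-subgraph classes I would argue that duplication creates no new induced copy of the forbidden graph $F$ (with $F=C_4$ or $F=K_{1,4}$): two copies of the same original vertex are adjacent, so they cannot occupy a non-adjacent pair of positions in $F$, and since all copies of a vertex share the same external neighborhood, a short case analysis shows that an induced $F$ in $G\boxtimes K_i$ must use copies of $|V(F)|$ distinct original vertices and therefore projects to an induced $F$ already present in $G$, contradicting that $G$ is $F$-free. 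This closure verification is the only genuinely technical step, and is where I expect the main work to lie.

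Next I would verify the conditions on $\Pi$. Both planar and bipartite graphs include all independent sets but exclude some complete graph ($K_5$ is non-planar, $K_3$ is non-bipartite), so both lie in $\PSA$, with $\chi(\Pi)=4$ in the planar case and $\chi(\Pi)=2$ in the bipartite case. The required graphs $n\cdot K_{\chi(\Pi)}$, which are disjoint unions of copies of $K_4$ or of $K_2$, are planar and bipartite respectively for every $n$, since disjoint union preserves both properties. Hence \autoref{th:chr} applies in all six cases and shows $P(G,\Pi_G,\Pi,k)$ is as hard as $P(G,\Pi_G,\IS,k)$; since finding independent sets is \WO-hard on unit-disk, $C_4$-free, and $K_{1,4}$-free graphs by the cited results, $P(G,\Pi_G,\Pi,k)$ is \WO-hard. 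For the matching upper bound I would note that $P(G,\Pi,k)$ is already in \WO for general graphs by the Khot--Raman dichotomy (it is \WO-complete whenever $\Pi\in\PSA$), and restricting the input graph to the subclass $\Pi_G$ only specializes the instances, so membership in \WO is preserved. Combining the two bounds gives \WO-completeness.
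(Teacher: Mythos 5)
Your proposal is correct and follows essentially the same route as the paper: verify that each of the three input classes lies in $\PAA$ and is closed under strong products with complete graphs, verify that planar and bipartite graphs contain all graphs $n\cdot K_{\chi(\Pi)}$, invoke \autoref{th:chr} to transfer the known \WO-hardness of independent set on these classes, and note \WO-membership from the Khot--Raman result. If anything, you supply more detail than the paper does (the adjacent-twins argument for closure of the $C_4$-free and $K_{1,4}$-free classes, and the explicit membership argument), but the underlying proof is the same.
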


\subsection{Hardness from joins with cliques}

The \emph{join} of two graphs $G+H$ is a graph formed from the disjoint union of $G$ and $H$ by adding edges from each vertex of $G$ to each vertex of $H$. The reduction that we consider in this section involves the join with a disjoint union of cliques,
$G+t\cdot K_c$. That is, starting from $G$ we add $t$ cliques of size $c$, with each vertex in $G$ connected to all vertices in these cliques.

\begin{observation}\label{rem:maxCliqueSize}
Let $G$ have maximum clique size $\omega(G)$, and let $t$ and $c$ be positive integers.
Then the maximum clique size of $G+t\cdot K_c$ is $\omega(G)+c$.
\end{observation}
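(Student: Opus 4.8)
The plan is to characterize the cliques of $G + t\cdot K_c$ structurally and then optimize over that characterization. First I would observe that in the disjoint union $t\cdot K_c$, two vertices lying in different copies of $K_c$ are non-adjacent; hence any clique $Q$ of $G + t\cdot K_c$ can contain vertices from at most one of the $t$ cliques. Writing $Q = A \cup B$ with $A = Q \cap V(G)$ and $B$ the part of $Q$ inside a single copy of $K_c$, the join construction guarantees that every vertex of $A$ is adjacent to every vertex of $B$, so the only constraints are that $A$ be a clique of $G$ and that $B$ be a subset of one $K_c$ (which is automatically a clique, since $K_c$ is complete). This yields the decomposition $|Q| = |A| + |B|$ with $A$ an arbitrary clique of $G$ and $|B| \le c$.

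For the upper bound, any clique therefore has size at most $\omega(G) + c$, since $|A| \le \omega(G)$ and $|B| \le c$. For the matching lower bound I would exhibit a clique attaining this size: take a maximum clique of $G$ (of size $\omega(G)$) together with all $c$ vertices of a single copy of $K_c$. Every pair within the $G$-part is adjacent because it is a clique, every pair within the $K_c$-part is adjacent because $K_c$ is complete, and every cross pair is adjacent by the join edges; hence this set of $\omega(G) + c$ vertices is a clique. Combining the two bounds gives $\omega(G + t\cdot K_c) = \omega(G) + c$.

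The only step that requires genuine care is the claim that a clique meets at most one copy of $K_c$, which is exactly where the disjoint-union (as opposed to join) structure among the $t$ cliques is essential; everything else is immediate from the definition of the join. I would also remark that the multiplicity $t$ plays no role in the final value beyond being at least $1$, since additional copies of $K_c$ contribute only mutually non-adjacent vertices that cannot enlarge any single clique.
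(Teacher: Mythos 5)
Your argument is correct and is exactly the standard reasoning behind this statement; the paper states it as an observation without proof, and your decomposition of a clique into its $G$-part plus its intersection with a single copy of $K_c$ (using that distinct copies are mutually non-adjacent), together with the explicit witness of a maximum clique of $G$ joined to one full $K_c$, is precisely the intended justification.
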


\begin{theorem}\label{thm:hard_from_join}
Let $\Pi_G \in \PAA$ be a hereditary property which is closed under joins with disjoint unions of cliques, and $\Pi \in \PSA$ be a hereditary property which includes all subgraphs $I+n\cdot K_{\chi(\Pi)-1}$ for an independent set $I$ and positive integer $n$.
Then the problem $P(G,\Pi_G,\Pi,k)$ is as hard as $P(G,\Pi_G,\IS,k)$. 
\end{theorem}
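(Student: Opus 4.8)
The plan is to mirror the reduction of \autoref{th:chr}, but with the strong product replaced by a join. Given an instance $(G,k)$ of $P(G,\Pi_G,\IS,k)$, I would output the instance $(G',k')$ of $P(G,\Pi_G,\Pi,k')$ defined by
\[
G' = G + k\cdot K_{\chi(\Pi)-1}, \qquad k' = k\cdot\chi(\Pi).
\]
Since $\Pi_G$ is closed under joins with disjoint unions of cliques, $G'\in\Pi_G$, so this is a legal instance; the construction takes polynomial time, and $k'$ depends only on $k$ and the fixed property $\Pi$ (in particular, and unlike \autoref{th:chr}, this reduction never needs to compute a chromatic number of the input $G$). It then remains to prove the equivalence that $G$ has an independent set of size $k$ if and only if $G'$ has a $k'$-vertex induced subgraph in $\Pi$.

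For the forward direction, if $I$ is an independent set of size $k$ in $G$, then the subgraph of $G'$ induced by $I$ together with all $k$ cliques of $k\cdot K_{\chi(\Pi)-1}$ is exactly $I + k\cdot K_{\chi(\Pi)-1}$. This lies in $\Pi$ by hypothesis and has $k + k(\chi(\Pi)-1) = k'$ vertices, as required.

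The converse is the step I expect to be the crux, because an adversary is free to choose a $k'$-vertex subgraph $W\in\Pi$ that does not pick whole cliques, so the part of $W$ coming from $G$ need not itself be independent; I must nonetheless recover an independent set of size $k$. I would write $W$ as the join $W = G[S_G] + H[S_H]$, where $H = k\cdot K_{\chi(\Pi)-1}$ and the join structure holds because every vertex of $G$ is adjacent to every vertex of $H$ in $G'$. Setting $a=\chi(G[S_G])$ and $b=\chi(H[S_H])$ and using $\chi(A+B)=\chi(A)+\chi(B)$ together with $\chi(W)\le\chi(\Pi)$ gives $a+b\le\chi(\Pi)$. The two estimates that drive the argument are: each of the $k$ cliques contributes at most $b$ vertices to $S_H$, so $|S_H|\le kb$ and hence $|S_G| = k'-|S_H| \ge k(\chi(\Pi)-b)$; and, by \autoref{obs:ISsize}, $G[S_G]$ contains an independent set of size at least $|S_G|/a \ge |S_G|/(\chi(\Pi)-b)$, where the last step uses $a\le\chi(\Pi)-b$. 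Combining them yields an independent set of $G$ of size at least $k(\chi(\Pi)-b)/(\chi(\Pi)-b)=k$, which completes the equivalence.

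The delicate point worth isolating is the choice of exactly $k$ cliques: it is precisely what makes the pigeonhole bound $|S_H|\le kb$ cancel against the loss factor $\chi(\Pi)-b$ from \autoref{obs:ISsize}, while simultaneously keeping $k'=k\cdot\chi(\Pi)$ a function of $k$ alone so that the map is a genuine parameterized reduction. Before finalizing I would check the degenerate cases $S_H=\emptyset$ (i.e.\ $b=0$) and $\chi(\Pi)=1$, both of which are covered by the same inequality, and confirm that $\chi(\Pi)$ is finite, which is implicit in the hypothesis that the graphs $I+n\cdot K_{\chi(\Pi)-1}$ are well defined and lie in $\Pi$.
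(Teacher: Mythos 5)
Your proof is correct, but the crux of your backward direction is genuinely different from the paper's. The paper joins $G$ with $r\cdot K_{\chi(\Pi)-1}$ for the Ramsey-sized $r=R(\chi(\Pi)+1,k)$ and sets $k'=k+r(\chi(\Pi)-1)$; its converse then splits into cases according to whether the $G$-side $S_1$ of the chosen subgraph is independent, and in the non-independent case forces $|S_1|\ge r$ (because an edge in $S_1$ would complete any full clique of the attached part to a forbidden $K_{\chi(\Pi)+1}$) so that Ramsey's theorem extracts the independent set. You instead attach only $k$ cliques, set $k'=k\chi(\Pi)$, and run a single unified counting argument: writing the witness as the join $G[S_G]+H[S_H]$ and using $\chi(A+B)=\chi(A)+\chi(B)$ together with $\chi(W)\le\chi(\Pi)$, the pigeonhole bound $|S_H|\le kb$ and \autoref{obs:ISsize} cancel exactly to give an independent set of size $k$; your checks that $S_G\ne\emptyset$ and $b\le\chi(\Pi)-1$ close the degenerate cases. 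Your route buys a linear parameter blow-up $k'=k\chi(\Pi)$ instead of the paper's $k'=\Theta(R(\chi(\Pi)+1,k))$, avoids any need to compute or bound a Ramsey number inside the reduction, and replaces the case analysis with one inequality chain; the paper's route is slightly more self-contained in that it only ever invokes the exclusion of $K_{\chi(\Pi)+1}$ from $\Pi$ rather than the join formula for chromatic numbers, but both arguments rest on the same implicit hypothesis that $\chi(\Pi)$ is finite, which you rightly flag.
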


\begin{proof}
We first construct a new graph $G'=G+r\cdot K_c$,
where $r=R(\chi(\Pi)+1,k)$ and $c=\chi(\Pi)-1$, and a new parameter value $k'=k+rc$.
By the assumption that $\Pi_G$ is closed under joins with disjoint unions of cliques, $G'\in\Pi_G$. Now, we show that $G$ has an independent set of size $k$ if and only if $G'$ has an induced subgraph of size $k'$ belonging to $\Pi$.

\begin{description}
\item[($\Rightarrow$)] Let $I$ be an independent set of $G$ of size $k$. Consider the induced subgraph $I+r\cdot K_c$ of $G'$,
formed by including all vertices that were added to $G$. This subgraph has size $k'=k+rc$, and by assumption it belongs to $\Pi$.
 
\item[($\Leftarrow$)] Let $H \in \Pi$ be an induced subgraph of $G'$ of size $k'$. The vertices of $H$ can be partitioned into two sets $S_1\subset V(G)$ and $S_2\subset r\cdot K_c$. The following two cases can occur:

\begin{itemize}
\item If $S_1$ is not an independent set, let $uv$ be an edge in $S_1$. Then $S_2$ must have at most $c-1$ vertices in each clique of $r\cdot K_c$, for if it contained all $c$ vertices of one of these cliques, then these $c$ vertices together with $u$ and $v$ would form a clique of size $\chi(\Pi)+1$, which is disallowed in $\Pi$. Therefore, $S_2$ has at most $r(c-1)$ vertices, and to obtain total size $k'$, $S_1$ must have at least $k+r$ vertices. By the definition of $r$ and by Ramsey's theorem, $S_1$ has either a clique of size $\chi(\Pi)+1$ (again, an impossibility) or an independent set of size $k$, as desired.

\item If $S_1$ is  an independent set, we observe that, even if $S_2$ includes all of the vertices added to $G$ to form $G'$, it has only $rc$ vertices. Therefore, to obtain total size $k'$, $S_1$ must have at least $k$ vertices, and contains an independent set of size $k$, as desired.
\end{itemize}
\end{description}
\end{proof}

There are many families $\Pi_G$ that meet the requirements on $\Pi_G$ in this theorem, but do not meet the requirements of \autoref{th:chr}: this will be true, for instance, when the forbidden subgraphs of $\Pi_G$ do not include disjoint unions of cliques, and are co-connected (so they cannot be formed by joins, which produce co-disconnected graphs) but at least one of these graphs contains two adjacent twin vertices (with the same neighbors other than each other). The requirement on $\Pi$ in this theorem is met, for instance, by the family $\Pi$ of bipartite graphs. In this case, $\chi(\Pi)=2$, so the graphs $I+n\cdot K_{\chi(\Pi)-1}$ are just  complete bipartite graphs, which are of course bipartite.

As an example, finding $k$-independent sets in $\overline{K_{1,3}}$-free graphs (the complements of claw-free graphs) is known to be NP-complete, from the completeness of the same problem in triangle-free graphs~\cite{MR351881}. \autoref{thm:hard_from_join} then shows that finding $k$-vertex bipartite induced  subgraphs of $\overline{K_{1,3}}$-free graphs is also NP-complete. However, we cannot use this method to prove parameterized hardness for this example, because the $k$-independent set problem in $\overline{K_{1,3}}$-free graphs can be solved in \FPT{} by applying an \FPT{} algorithm for $(k-1)$-independent sets in triangle-free graphs~\cite{DBLP:conf/iwoca/DabrowskiLMR10} to the sets of non-neighbors of all vertices.

\section{Conclusion}\label{sec:concl}
 
We have further narrowed down the parameterized complexity of the problem $P(G,\Pi,k)$ for the case when it is \WO-complete. In particular, restricting the input graph $G$ to belong to a hereditary graph class $\Pi_G$ helps us to settle parameterized complexity of numerous graph classes circumventing long and tedious reduction proofs.\remove{We have shown the following results:
 
 \begin{itemize}
 \item If $\Pi_G \in \PAS$ and $\Pi \in \PSA$ or vice versa, then the problem $P(G,\Pi_G,\Pi,k)$ is solvable in polynomial time.

 \item  If both $\Pi_G, \Pi \in \PAS$ or both $\Pi_G, \Pi \in \PSA$, then the problem is \FPT. 
 
 \item If $\Pi_G \in \PSS$, then the problem $P(G,\Pi_G,\Pi,k)$ is solvable in polynomial time.
 
 \item If $\Pi_G \in \PAA$ be a hereditary property which is closed under modular operation, and $\Pi \in \PSA$ be a hereditary property for which the chromatic number is less than or equal to the maximum allowed clique size $c_{\Pi}-1$. Then, the parameterized complexity of the problem is as hard as finding an independent set of size $k$ in $G$.
 
 \item If $\Pi_G \in \PAA$ be a hereditary property which is closed under mega biclique operation, and $\Pi \in \PSA$ be a hereditary property which does not have generalized bicliques, of maximum clique size $c_\Pi - 1$, as forbidden subgraphs. Then, the parameterized complexity of the problem is as hard as finding an independent set of size $k$ in $G$.
 
\end{itemize} 
 
These characterizations contribute to the parameterized complexity of the problem; when both $\Pi$ and $\Pi_G$ are the properties of being planar, bipartite, triangle-free, or co-bipartite, then the problem is \FPT. Additionally, if $\Pi_G$ is the property of being  unit-disk graph, $C_4$-free, or $K_{1,4}$-free, and $\Pi$ is the property of being either  planar or  bipartite, then the problem $P(G,\Pi_G,\Pi,k)$ is \WO-complete.} 
It remains an open problem to determine the parameterized complexity of the problem $P(G,\Pi_G,\Pi,k)$ when $\Pi_G \in \PAA$ without any restrictions. It would be also interesting to investigate this problem under other graph parameters beyond the size of the solution.

\clearpage

\bibliographystyle{abbrvurl}
\bibliography{main}

\end{document}